\newcommand{\du}{\mathrm{d}}
\DeclareMathOperator{\dom}{\mathrm{dom}}
\DeclareMathOperator*{\argmin}{ \arg \min }
\DeclareMathOperator{\tr}{\mathrm{Tr}}
\title{Maximum-Likelihood Quantum State Tomography by Cover's Method with Non-Asymptotic Analysis}
\author[1]{Chien-Ming Lin}
\author[2,3,4]{Hao-Chung Cheng}
\author[1,3]{Yen-Huan Li}
\date{\empty}
\affil[1]{Department of Computer Science and Information Engineering, National Taiwan University}
\affil[2]{Department of Electrical Engineering, National Taiwan University}
\affil[3]{Department of Mathematics, National Taiwan University}
\affil[4]{Hon Hai (Foxconn) Quantum Computing Centre}
\begin{document}

\maketitle

\begin{abstract}
We propose an iterative algorithm that computes the maximum-likelihood estimate in quantum state tomography. 
The optimization error of the algorithm converges to zero at an \( O ( ( 1 / k ) \log D ) \) rate, where \( k \) denotes the number of iterations and \( D \) denotes the dimension of the quantum state. 
The per-iteration computational complexity of the algorithm is \( O ( D ^ 3 + N D ^2 ) \), where \( N \) denotes the number of measurement outcomes. 
The algorithm can be considered as a parameter-free correction of the \( R \rho R \) method [A. I. Lvovsky. Iterative maximum-likelihood reconstruction in quantum homodyne tomography. \textit{J. Opt. B: Quantum Semiclass. Opt.} 2004] [G. Molina-Terriza et al. Triggered qutrits for quantum communication protocols. \textit{Phys. Rev. Lett.} 2004.]. 
\end{abstract}

\section{Introduction}

Quantum state tomography aims to estimate the quantum state of a physical system, given measurement outcomes (see, e.g., \cite{Paris2004} for a complete survey). 
There are various approaches to quantum state tomography, such as trace regression \cite{Flammia2012,Gross2010,Opatrny1997,Yang2020,Youssry2019}, maximum-likelihood estimation \cite{Hradil1997,Hradil2004}, Bayesian estimation \cite{Blume-Kohout2010,Blume-Kohout2010a}, and recently proposed deep learning-based methods \cite{Ahmed2020,Quek2021}\footnote{A  confusion the authors frequently encounter is that many people mix state tomography with the notion of shadow tomography introduced by Aaronson \cite{Aaronson2018,Aaronson2020}.
State tomography aims at estimating the quantum state, whereas shadow tomography aims at estimating the probability distribution of measurement outcomes.
Indeed, one interesting conclusion by Aaronson is that shadow tomography requires much less data than state tomography. }. 
Among existing approaches, the maximum-likelihood estimation approach has been standard in practice and enjoys favorable asymptotic statistical guarantees (see, e.g., \cite{Hradil2004,Scholten2018}). 
The maximum-likelihood estimator is given by the optimization problem: 
\begin{equation}
\hat{\rho} \in \argmin_{\rho \in \mathcal{D}} f ( \rho ) , \quad f ( \rho ) \coloneqq \frac{1}{N} \sum_{n = 1}^N - \log \tr ( M_n \rho ) , \label{eq_problem}
\end{equation}
for some Hermitian positive semi-definite matrices \( M_n \), where \( \mathcal{D} \) denotes the set of quantum density matrices, i.e., 
\[
\mathcal{D} \coloneqq \Set{ \rho \in \mathbb{C}^{D \times D} | \rho = \rho^\dagger, \rho \geq 0, \tr ( \rho ) = 1 } ,  
\]
and \( N \) denotes the number of measurement outcomes. 
We write \( \rho^\dagger \) for the conjugate transpose of \( \rho \). 

\( R \rho R \) is a numerical method developed to solve \eqref{eq_problem} \cite{Lvovsky2004,MolinaTerriza2004}. 
Given a positive definite \( \rho_1 \in \mathcal{D} \), \( R \rho R \) iterates as 
\[
\rho_{k + 1} = \mathcal{N} ( R ( \rho_k ) \rho_k R ( \rho_k ) ), \quad R ( \rho_k ) \coloneqq - \nabla f ( \rho_k ) = \frac{1}{N} \sum_{n = 1}^N \frac{M_n}{\tr ( M_n \rho_k )} , \quad \forall k \in \mathbb{N} , 
\]
where the mapping \( \mathcal{N} \) scales its input such that \( \tr ( \rho_{k + 1} ) = 1 \), and \( \nabla f \) denotes the gradient mapping of \( f \). 
\( R \rho R \) is parameter-free (i.e., it does not require parameter tuning) and typically converges fast in practice. 
Unfortunately, one can construct a synthetic data-set on which \( R \rho R \) does not converge \cite{Rehacek2007}. 

According to \cite{Lvovsky2004}, \( R \rho R \) is inspired by Cover's method\footnote{Indeed, Cover's method coincides with the expectation maximization method for solving \eqref{eq_classical} and hence is typically called expectation maximization in literature. Nevertheless, Cover's and our derivations and convergence analyses do not need and are not covered by existing results on expectation maximization, so we do not call the method expectation maximization to avoid possible confusions.} for solving the optimization problem \cite{Cover1984}: 
\begin{equation}
x^\star \in \argmin_{ x \in \Delta } g ( x ) , \quad g ( x ) \coloneqq \frac{1}{N} \sum_{n = 1}^N - \log \braket{a_n, x} , \label{eq_classical}
\end{equation}
for some entry-wise non-negative vectors \( a_n \in \mathbb{R}^D \), where \( \Delta \) denotes the probability simplex in \( \mathbb{R}^D \), i.e., 
\[
\Delta \coloneqq \Set{ x = ( x_1, \ldots, x_D ) \in \mathbb{R}^D | x_d \geq 0 ~ \forall ~ d , \sum_{d = 1}^D x_d = 1 } ,  
\]
and the inner product is the one associated with the Euclidean norm. 
The optimization problem appears when one wants to compute the \emph{growth-optimal portfolio} for long-term investment (see, e.g., \cite{MacLean2012}). 
Given an entry-wise strictly positive initial iterate \( x_1 \in \mathbb{R}_{++}^D \), Cover's method iterates as 
\[
x_{k + 1} = x_k \circ \left( - \nabla g ( x_k ) \right) , \quad \forall k \in \mathbb{N} , 
\]
where \( \circ \) denotes the entry-wise product, aka the Schur product. 
Cover's method is guaranteed to converge to the optimum \cite{Cover1984}. 
Indeed, if the matrices in \eqref{eq_problem} share the same eigenbasis, then it is easily checked that \eqref{eq_problem} is equivalent to \eqref{eq_classical} but \( R \rho R \) is not equivalent to Cover's method\footnote{Cover's method does not need the scaling mapping \( \mathcal{N} \). One can check that its iterates are already in \( \Delta \).}. 
This explains why \( R \rho R \) does not inherit the convergence guarantee of Cover's method.  

Rehacek et al. proposed a diluted correction of \( R \rho R \) \cite{Rehacek2007}.
Given a positive definite initial iterate \( \rho_1 \in \mathcal{D} \), diluted \( R \rho R \) iterates as 
\[
\rho_{k + 1} = \mathcal{N} ( \left( R ( \rho_k ) + \alpha_k I \right) \rho_k \left( R ( \rho_k ) + \alpha_k I \right) ), 
\]
where the parameter \( \alpha_k \) is chosen by exact line search. 
Later, Goncalves et al. proposed a variant of diluted \( R \rho R \) by adopting Armijo line search \cite{Goncalves2014}. 
Both versions of diluted \( R \rho R \) are guaranteed to converge to the optimum. 
Unfortunately, the convergence guarantees of both versions of diluted \( R \rho R \) are asymptotic and do not allow us to characterize the \emph{iteration complexity}, the number of iterations required to obtain an approximate solution of \eqref{eq_problem}. 
In particular, the dimension \( D \) grows exponentially with the number of qubits and can be huge, but the dependence of the iteration complexities of the diluted \( R \rho R \) methods on \( D \) is unclear. 

We propose the following algorithm. 
\begin{itemize}
\item Set \( \rho_1 = I / D \), where \( I \) denotes the identity matrix. 
\item For each \( k \in \mathbb{N} \), compute 
\[
\rho_{k + 1} = \mathcal{N} \left( \exp \left( \log ( \rho_k ) + \log ( R ( \rho_k ) ) \right) \right) , 
\]
where \( \exp \) and \( \log \) denote matrix exponential and logarithm, respectively. 
\end{itemize}
Notice that the objective function implicitly requires that \( \tr ( M_n \rho_k ) \) are strictly positive; 
otherwise, \( R ( \rho_k ) \) does not exist as \( - \log \tr ( M_n \rho_k ) \) is not well-defined. 
Our initialization and iteration rule guarantee that \( \rho_k \) are full-rank and \( \tr ( M_n \rho_k ) \) are strictly positive. 

Let us discuss the computational complexity of the proposed algorithm. 
The computational complexity of computing \( \nabla f ( \rho_k ) \) is \( O ( N D ^ 2 ) \). 
The computational complexities of computing matrix logarithm and exponential are \( O ( D ^ 3 ) \). 
The per-iteration computational complexity is hence \( O ( D ^ 3 + N D ^ 2 ) \). 

We can observe that the proposed algorithm recovers Cover's method when \( \rho_k \) and \( \nabla f ( \rho_k ) \) share the same eigenbasis. 
We show that the proposed algorithm indeed converges and its iteration complexity is logarithmic in the dimension \( D \). 

\begin{theorem} \label{thm_main}
Assume that \( \bigcap_{n = 1}^N \ker ( M_n ) = \set{ 0 } \). 
Let \( ( \rho_k )_{k \in \mathbb{N}} \) be the sequence of iterates generated by the proposed method. 
Define \( \overline{\rho}_k \coloneqq ( \rho_1 + \cdots + \rho_k ) / k \). 
Then, for every \( \varepsilon > 0 \), we have \( f ( \overline{\rho}_k ) - f ( \hat{\rho} ) \leq \varepsilon \) if \( k \geq ( 1 / \varepsilon ) \log D \). 
\end{theorem}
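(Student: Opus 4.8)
The plan is to run a one-step ``descent in relative entropy'' argument, telescope it against the quantum (Umegaki) relative entropy $D(\hat\rho\,\Vert\,\rho)\coloneqq\tr(\hat\rho\log\hat\rho)-\tr(\hat\rho\log\rho)$ to the optimum, and then average via convexity of $f$. Before that I would record three elementary facts. First, the homogeneity identity $\tr(\rho_k R(\rho_k))=\tfrac1N\sum_n\tr(M_n\rho_k)/\tr(M_n\rho_k)=1$, valid for every $k$. Second, the stationarity of $\hat\rho$: first-order optimality over $\mathcal D$, after identifying the Lagrange multiplier with $\tr(\hat\rho R(\hat\rho))=1$, yields the fixed-point relation $R(\hat\rho)\hat\rho=\hat\rho$ (so $R(\hat\rho)$ commutes with $\hat\rho$). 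Third, since $\rho_1=I/D$ we have $D(\hat\rho\,\Vert\,\rho_1)=\log D+\tr(\hat\rho\log\hat\rho)\le\log D$, because $\tr(\hat\rho\log\hat\rho)\le0$.

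The heart of the matter is the per-step inequality
\begin{equation}
f(\rho_k)-f(\hat\rho)\le D(\hat\rho\,\Vert\,\rho_k)-D(\hat\rho\,\Vert\,\rho_{k+1}). \label{eq_perstep}
\end{equation}
To set it up I would use the iteration rule to write $\log\rho_{k+1}=\log\rho_k+\log R(\rho_k)-(\log Z_k)\,I$ with $Z_k\coloneqq\tr\exp(\log\rho_k+\log R(\rho_k))$, so the right-hand side of \eqref{eq_perstep} equals $\tr(\hat\rho\log R(\rho_k))-\log Z_k$. The Golden--Thompson inequality then gives $Z_k\le\tr(\rho_k R(\rho_k))=1$, hence $-\log Z_k\ge0$; it therefore suffices to prove the sharper, normalization-free bound
\[
f(\rho_k)-f(\hat\rho)=\frac1N\sum_{n=1}^N\log\frac{\tr(M_n\hat\rho)}{\tr(M_n\rho_k)}\le\tr(\hat\rho\log R(\rho_k)).
\]

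This last bound is where I expect the real work to be, and it is here that stationarity of $\hat\rho$ must enter: the inequality is \emph{false} for a generic comparator (it already fails at $\rho=\rho_k$, where it would read $0\le\tr(\rho_k\log R(\rho_k))$, a quantity that is typically negative). In the commutative case it reduces to Cover's estimate: introducing the responsibilities $p^{(n)}_d\propto(M_n)_d\,\hat\rho_d/\tr(M_n\hat\rho)$, stationarity makes their sample average reproduce $\hat\rho$, which rewrites each eigenvalue of $R(\rho_k)$ as a genuine average of the likelihood ratios $\tr(M_n\hat\rho)/\tr(M_n\rho_k)$; a single application of Jensen's inequality (concavity of $\log$, now in the favorable direction) then closes the gap after re-summing. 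The non-commutative version should follow the same scheme with the operator responsibilities $\hat\rho^{1/2}M_n\hat\rho^{1/2}/\tr(M_n\hat\rho)$, which by stationarity sum (after dividing by $N$) to $\hat\rho$; the scalar Jensen step must, however, be upgraded to operator concavity of the logarithm (an operator Jensen inequality), and care is needed since $\hat\rho$, the $M_n$, and $R(\rho_k)$ need not commute and $\hat\rho$ may be rank-deficient (to be handled by restricting to $\im\hat\rho$ or a limiting argument).

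Granting \eqref{eq_perstep}, the proof finishes quickly. Summing over $k=1,\dots,K$ telescopes the right-hand side to $D(\hat\rho\,\Vert\,\rho_1)-D(\hat\rho\,\Vert\,\rho_{K+1})\le D(\hat\rho\,\Vert\,\rho_1)\le\log D$, so that $\sum_{k=1}^K\bigl(f(\rho_k)-f(\hat\rho)\bigr)\le\log D$. Convexity of $f$ gives $f(\overline\rho_K)\le\tfrac1K\sum_{k=1}^K f(\rho_k)$, whence $f(\overline\rho_K)-f(\hat\rho)\le(\log D)/K$. Requiring the right-hand side to be at most $\varepsilon$ yields $K\ge(1/\varepsilon)\log D$, which is exactly the claimed iteration bound.
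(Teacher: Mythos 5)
Your route is genuinely different from the paper's, and it has one genuine gap, located exactly where you say ``the real work'' is. The paper never invokes optimality of \( \hat{\rho} \) and never telescopes against \( \hat{\rho} \): its Lemma \ref{lem_error} establishes only the weaker bound \( f ( \rho ) - f ( \hat{\rho} ) \leq \max_{\sigma \in \mathcal{D}} \braket{ \log R ( \rho ), \sigma } = \lambda_{\max} ( \log R ( \rho ) ) \), applies it to \( \overline{\rho}_K \) itself, and then needs the separate, technically heavy Lemma \ref{lem_convexity} (convexity of \( \rho \mapsto \braket{ \log R ( \rho ), \sigma } \), proved via second Fr\'{e}chet derivatives and a matrix Cauchy--Schwarz inequality) before telescoping against a worst-case \( \sigma \). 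Your plan is instead the Iusem/EM-style relative-entropy descent, and the paper explicitly remarks in the introduction that it is ``currently unclear'' to the authors how to extend that classical argument to the quantum setup. Consequently the inequality \( f ( \rho_k ) - f ( \hat{\rho} ) \leq \tr ( \hat{\rho} \log R ( \rho_k ) ) \), which you correctly identify as needing stationarity and only sketch, is not a routine detail: on your route it carries the entire content of the theorem, and as written the proposal does not prove it. (Everything surrounding it --- \( \tr ( \rho_k R ( \rho_k ) ) = 1 \), the fixed-point relation \( R ( \hat{\rho} ) \hat{\rho} = \hat{\rho} \), Golden--Thompson giving \( Z_k \leq 1 \), the telescoping, \( D ( \hat{\rho} \Vert \rho_1 ) \leq \log D \), and the final convexity/averaging step --- is correct.)

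The gap is fillable, though not with the operators you name: the sandwiched responsibilities \( \hat{\rho}^{1/2} M_n \hat{\rho}^{1/2} / \tr ( M_n \hat{\rho} ) \) are the quantum analogue of the distributions over \( d \) indexed by \( n \), whereas the classical Jensen step behind this inequality uses the dual family (for each \( d \), a distribution over \( n \)). The right objects are \( S_n \coloneqq M_n / ( N \tr ( M_n \hat{\rho} ) ) \): stationarity gives \( \sum_n S_n = R ( \hat{\rho} ) \leq I \) together with \( \tr ( \hat{\rho} ( I - R ( \hat{\rho} ) ) ) = 0 \). When \( \sum_n S_n = I \), the map \( \Phi ( C ) \coloneqq \sum_n C_{n n} S_n \) is positive and unital, so the Davis--Choi--Jensen inequality for the operator concave function \( \log \) yields
\[
\log R ( \rho_k ) = \log \Bigl( \sum_n c_n S_n \Bigr) \geq \sum_n ( \log c_n ) S_n , \qquad c_n \coloneqq \frac{ \tr ( M_n \hat{\rho} ) }{ \tr ( M_n \rho_k ) } ;
\]
tracing against \( \hat{\rho} \) and using \( \tr ( \hat{\rho} S_n ) = 1 / N \) gives precisely \( \tr ( \hat{\rho} \log R ( \rho_k ) ) \geq \frac{1}{N} \sum_n \log c_n = f ( \rho_k ) - f ( \hat{\rho} ) \). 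If \( \hat{\rho} \) is rank-deficient, append \( S_0 \coloneqq I - R ( \hat{\rho} ) \) with weight \( c_0 > 0 \) to restore unitality; since \( \tr ( \hat{\rho} S_0 ) = 0 \) the extra term vanishes after tracing, and letting \( c_0 \downarrow 0 \) (legitimate because \( R ( \rho_k ) \) is positive definite) recovers the claim. With this step supplied your argument closes, and it even bounds \( \sum_{k = 1}^K ( f ( \rho_k ) - f ( \hat{\rho} ) ) \) and hence the best iterate, which the paper's averaged argument does not directly give; the price is reliance on first-order optimality of \( \hat{\rho} \), which the paper's proof avoids entirely.
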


\begin{remark}
Suppose \( \mathcal{K} \coloneqq \bigcap_{n = 1}^N \ker ( M_n ) \neq \set{ 0 } \). 
Let \( U \) be a matrix whose columns form an orthogonal basis of \( \mathcal{K}^\perp \). 
Then, it suffices to solve \eqref{eq_problem} on a lower-dimensional space by replacing \( A_n \) with \( U^\dagger A_n U \) in the objective function. 
\end{remark}

Recall that \eqref{eq_classical} and Cover's method are special cases of \eqref{eq_problem} and the proposed algorithm, respectively. 
Moreover, Cover's method is equivalent to the expectation maximization method for Poisson inverse problems \cite{Vardi1993}. 
Theorem \ref{thm_main} is hence of independent interest even for computing the growth-optimal portfolio by Cover's method and solving Poisson inverse problems by expectation maximization, showing that the iteration complexities of both are also \( O ( \varepsilon^{-1} \log D ) \). 
This supplements the asymptotic convergence results in \cite{Cover1984} and \cite{Vardi1985}. 
Whereas the same iteration complexity bound for Cover's method in growth-optimal portfolio is immediate from a lemma due to Iusem \cite[Lemma 2.2]{Iusem1992}, it is currently unclear to us how to extend Iusem's analysis to the quantum setup. 


\section{Proof of Theorem \ref{thm_main}}

For convenience, let \( M \) be a random matrix following the empirical probability distribution of \( M_1, \ldots, M_N \)\footnote{Notice the derivation here is not restricted to the empirical probability distribution.}. 
Then, we have \( f ( \rho ) = \mathsf{E} \left[ - \log \braket{ M, \rho } \right] \), where \( \mathsf{E} \) denotes the mathematical expectation. 
Define 
\[
r ( \rho ) \coloneqq \frac{M}{\braket{ M, \rho }} , \quad R ( \rho ) \coloneqq - \nabla f ( \rho ) = \mathsf{E} \left[ r ( \rho ) \right] , 
\]
where the inner product, as well as all inner products in the rest of this section, is the Hilbert-Schmidt inner product. 
We start with an error upper bound. 

\begin{lemma} \label{lem_error}
For any density matrix \( \rho \) such that \( R ( \rho ) \) exists, 
\[
f ( \rho ) - f ( \hat{\rho} ) \leq \max_{\sigma \in \mathcal{D}} \braket{ \log R ( \rho ), \sigma } . 
\] 
\end{lemma}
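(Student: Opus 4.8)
The plan is to avoid the naive first-order convexity bound and instead exploit the logarithmic structure of the objective directly through Jensen's inequality, which produces exactly the multiplicative/logarithmic form appearing on the right-hand side.

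First I would rewrite the optimization gap as a single expectation of a log-ratio. Since \( f ( \rho ) = \mathsf{E} [ - \log \braket{ M, \rho } ] \), we have
\[
f ( \rho ) - f ( \hat{\rho} ) = \mathsf{E} \left[ \log \frac{ \braket{ M, \hat{\rho} } }{ \braket{ M, \rho } } \right] .
\]
This reformulation is the crucial preparatory move: it isolates the randomness inside a concave function of a positive random variable. Next I would apply Jensen's inequality to pull the expectation inside the logarithm, and then recognize the resulting expectation as an inner product with \( R ( \rho ) \):
\[
\mathsf{E} \left[ \log \frac{ \braket{ M, \hat{\rho} } }{ \braket{ M, \rho } } \right] \leq \log \mathsf{E} \left[ \frac{ \braket{ M, \hat{\rho} } }{ \braket{ M, \rho } } \right] = \log \braket{ \mathsf{E} [ r ( \rho ) ], \hat{\rho} } = \log \braket{ R ( \rho ), \hat{\rho} } ,
\]
using linearity of the inner product together with \( \braket{ r ( \rho ), \hat{\rho} } = \braket{ M, \hat{\rho} } / \braket{ M, \rho } \) and the definition \( R ( \rho ) = \mathsf{E} [ r ( \rho ) ] \).

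Finally, I would pass from \( \hat{\rho} \) to a maximization over all density matrices and convert the scalar logarithm into the spectral form of the statement. Since \( R ( \rho ) \) is a nonnegative combination of the positive semi-definite \( M_n \), and is in fact positive definite under the standing assumption \( \bigcap_{n} \ker ( M_n ) = \set{ 0 } \), its logarithm is well-defined. Bounding \( \braket{ R ( \rho ), \hat{\rho} } \leq \max_{\sigma \in \mathcal{D}} \braket{ R ( \rho ), \sigma } = \lambda_{\max} ( R ( \rho ) ) \) and invoking monotonicity of \( \log \) on the positive eigenvalues yields \( \log \braket{ R ( \rho ), \hat{\rho} } \leq \log \lambda_{\max} ( R ( \rho ) ) = \lambda_{\max} ( \log R ( \rho ) ) = \max_{\sigma \in \mathcal{D}} \braket{ \log R ( \rho ), \sigma } \), which is the claimed bound; here I use the standard fact that \( \max_{\sigma \in \mathcal{D}} \braket{ A, \sigma } \) equals the largest eigenvalue of a Hermitian \( A \).

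The main point to get right — more a key idea than a technical obstacle — is the second step. A first-order convexity bound would only give \( f ( \rho ) - f ( \hat{\rho} ) \leq \braket{ R ( \rho ), \hat{\rho} } - 1 \leq \lambda_{\max} ( R ( \rho ) ) - 1 \), and since \( \log x \leq x - 1 \) this is strictly weaker than the target \( \log \lambda_{\max} ( R ( \rho ) ) \). Applying Jensen's inequality to the log-ratio, rather than linearizing the objective via its gradient, is precisely what supplies the tighter logarithmic bound; the remaining steps are routine identifications of inner products and the spectral characterization of the maximum.
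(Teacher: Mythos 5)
Your proposal is correct and follows essentially the same route as the paper's own proof: rewrite the gap as \( \mathsf{E} [ \log \braket{ r ( \rho ), \hat{\rho} } ] \), apply Jensen's inequality to move the expectation inside the logarithm, and then pass to \( \lambda_{\max} \) via the variational characterization \( \max_{\sigma \in \mathcal{D}} \braket{ A, \sigma } = \lambda_{\max} ( A ) \) together with \( \log \lambda_{\max} ( A ) = \lambda_{\max} ( \log A ) \). Your closing remark correctly identifies why Jensen on the log-ratio, rather than a first-order linearization, is the step that yields the logarithmic (rather than linear) bound.
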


\begin{remark}
Notice that \( R ( \rho ) \) is always positive definite and \( \log R ( \rho ) \) is always well-defined. 
Otherwise, suppose there exists some vector \( u \) such that \( \braket{ u, R ( \rho ) u } = 0 \). 
Then, as \( M_n \) are positive semi-definite, we have \( \braket{ u, M_n u } = 0 \) for all \( n \), violating the assumption that \( \bigcap_{n = 1}^N \ker ( M_n ) = \set{ 0 } \). 
\end{remark}

\begin{proof}[Lemma \ref{lem_error}]
By Jensen's inequality, we write 
\begin{align*}
f ( \rho ) - f ( \hat{\rho} ) & = \mathsf{E} \left[ \log \Braket{ \frac{M}{\braket{ M, \rho }}, \hat{\rho} } \right] \\
& \leq \log \left[ \Braket{ \mathsf{E} \left[ \frac{M}{\braket{ M, \rho }} \right], \hat{\rho} } \right] \\
& = \log \braket{ R ( \rho ), \hat{\rho} } \\
& \leq \log \lambda_{\max} ( R ( \rho ) ) \\
& = \lambda_{\max} ( \log R ( \rho ) ) \\
& = \max_{\sigma \in \mathcal{D}} \braket{ \log R ( \rho ), \sigma } .   
\end{align*}
\end{proof}

Deriving the following lemma is the major technical challenge in our convergence analysis. 
The lemma shows that the mapping \( \log R ( \cdot ) \) is operator convex. 

\begin{lemma} \label{lem_convexity}
For any density matrix \( \sigma \), the function \( \varphi ( \rho ) \coloneqq \braket{ \log R ( \rho ), \sigma } \) is convex. 
\end{lemma}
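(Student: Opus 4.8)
The plan is to exploit that convexity is a one-dimensional property: it suffices to fix a positive definite \( \rho \) (so that \( R ( \rho ) \) exists) together with a Hermitian direction \( H \) with \( \tr H = 0 \) (a difference of two density matrices), set \( \rho_u = \rho + u H \), and show that \( u \mapsto \varphi ( \rho_u ) \) has nonnegative second derivative at \( u = 0 \). To differentiate the matrix logarithm I would use the integral representation \( \log X = \int_0^\infty [ ( 1 + s )^{-1} I - ( X + s I )^{-1} ]\, \du s \), which turns every derivative into resolvents \( G \coloneqq ( R ( \rho ) + s I )^{-1} \) that are easy to differentiate; pairing with \( \sigma \) and using \( \tr \sigma = 1 \) removes the constant term, so that \( \varphi'' ( 0 ) = \int_0^\infty [ \tr ( \sigma G R'' G ) - 2 \tr ( \sigma G R' G R' G ) ]\, \du s \).

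Next I would compute the first two derivatives of \( R \) along the line. Writing \( m \coloneqq M / \braket{ M, \rho } \) (a random positive semidefinite matrix with \( \mathsf{E} [ m ] = R ( \rho ) \)) and the random real scalar \( \beta \coloneqq \tr ( m H ) = \braket{ M, H } / \braket{ M, \rho } \), a direct calculation gives \( R' ( 0 ) = - \mathsf{E} [ m \beta ] \) and \( R'' ( 0 ) = 2\, \mathsf{E} [ m \beta^2 ] \). Substituting these and using two independent copies \( m, m' \) (with \( \beta' \coloneqq \tr ( m' H ) \)) to expand the product of the two first-order terms, the integrand of \( \varphi'' ( 0 ) \) becomes, up to the factor \( 2 \), \( \mathsf{E} [ \beta^2 \tr ( \sigma G m G ) ] - \mathsf{E}_{ m, m' } [ \beta \beta' \tr ( \sigma G m G m' G ) ] \). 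The clean move is then to conjugate by \( G^{1/2} \): setting \( \tilde m \coloneqq G^{1/2} m G^{1/2} \succeq 0 \) and \( \tilde \sigma \coloneqq G^{1/2} \sigma G^{1/2} \succeq 0 \), this integrand equals \( \tr \big( \tilde \sigma\, ( \mathsf{E} [ \beta^2 \tilde m ] - ( \mathsf{E} [ \beta \tilde m ] )^2 ) \big) \).

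Everything thus reduces to the matrix inequality \( \mathsf{E} [ \beta^2 \tilde m ] \succeq ( \mathsf{E} [ \beta \tilde m ] )^2 \), after which \( \tilde \sigma \succeq 0 \) finishes the proof. This is the main obstacle, and the reason the lemma is hard: the naive composition rules point the wrong way, since \( \log \) is operator concave while \( R \) is operator convex, so the signs are against us and positivity can only come from this second-order, variance-type estimate. What rescues it is a normalization hidden in the resolvent: because \( \mathsf{E} [ m ] = R ( \rho ) = G^{-1} - s I \), one has \( \mathsf{E} [ \tilde m ] = I - s G \preceq I \). I would then prove the inequality by Cauchy--Schwarz over the probability space: for any vectors \( u, w \), write \( \braket{ w, \mathsf{E} [ \beta \tilde m ] u } = \mathsf{E} [ \beta \braket{ \tilde m^{1/2} w, \tilde m^{1/2} u } ] \), apply Cauchy--Schwarz, and bound \( \braket{ w, \mathsf{E} [ \tilde m ] w } \leq \norm{ w }^2 \) using \( \mathsf{E} [ \tilde m ] \preceq I \); this yields \( \norm{ \mathsf{E} [ \beta \tilde m ]\, u }^2 \leq \braket{ u, \mathsf{E} [ \beta^2 \tilde m ]\, u } \) for all \( u \), which is exactly the claim. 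Concluding, \( \varphi'' ( 0 ) \geq 0 \) on every line, so \( \varphi \) is convex; since \( \sigma \succeq 0 \) is arbitrary, the same computation in fact shows that \( \log R ( \cdot ) \) is operator convex. The remaining points are routine: justifying differentiation under the expectation and the integral (uniformly on compact \( u \)-intervals, using that \( \rho \) positive definite keeps \( \braket{ M, \rho } \) bounded below), and extending convexity from the positive definite density matrices to all of \( \mathcal{D} \) by continuity.
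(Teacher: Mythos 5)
Your proof is correct, and it follows the same overall architecture as the paper's: compute the second derivative along a line, push it through an integral representation of the derivative of the matrix logarithm, and reduce positivity of the integrand to a matrix Cauchy--Schwarz (variance-type) inequality for the random matrix \( r ( \rho ) \). Your formula \( \varphi'' ( 0 ) = \int_0^\infty [ \tr ( \sigma G R'' G ) - 2 \tr ( \sigma G R' G R' G ) ] \, \du s \) with \( G = ( R ( \rho ) + s I )^{-1} \) is exactly what the paper obtains via the chain rule for the second Fr\'{e}chet derivative. You diverge from the paper in two worthwhile ways. First, the paper disposes of the \( s \)-dependence by bounding \( A_1 ( A_0 + s I )^{-1} A_1 \leq A_1 A_0^{-1} A_1 \), reducing everything to the case \( s = 0 \), and then citing Lavergne's matrix extension of the Cauchy--Schwarz inequality, which in the normalized form reads \( \mathsf{E} [ \beta^2 \hat m ] \geq \mathsf{E} [ \beta \hat m ] ( \mathsf{E} [ \hat m ] )^{-1} \mathsf{E} [ \beta \hat m ] \). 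You instead keep \( s \) throughout, conjugate by \( G^{1/2} \), and exploit the identity \( \mathsf{E} [ \tilde m ] = I - s G \leq I \); this merges the two steps into one and is arguably cleaner. Second, you prove the needed inequality \( \mathsf{E} [ \beta^2 \tilde m ] \geq ( \mathsf{E} [ \beta \tilde m ] )^2 \) from scratch by a scalar Cauchy--Schwarz over the product of the probability space and \( \mathbb{C}^D \) (your bound \( \Vert \mathsf{E} [ \beta \tilde m ] u \Vert^2 \leq \braket{ u, \mathsf{E} [ \beta^2 \tilde m ] u } \) is correct, using that \( \mathsf{E} [ \beta \tilde m ] \) is Hermitian so its square equals \( T^\dagger T \)), which makes the argument self-contained where the paper outsources it to a reference. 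The only loose ends are the ones you already flag as routine --- differentiation under the expectation and the integral, and extending from positive definite \( \rho \) to the full domain of \( \varphi \) by continuity --- and the paper is no more explicit about these than you are.
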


\begin{proof}
Equivalently, we want to show that \( D^2 \varphi ( \rho ) [ \delta, \delta ] \geq 0 \) for all \( \rho \in \dom D^2 \varphi \) and Hermitian \( \delta \in \mathbb{C}^{D \times D} \).
Define \( A_0 \coloneqq R ( \rho ) \).  
By \cite[Example 3.22 and Exercise 3.24]{Hiai2014}, we have 
\begin{align*}
& A_1 \coloneqq D R ( \rho ) [ \delta ] = - \mathsf{E} \left[ r ( \rho ) \braket{ r ( \rho ), \delta } \right] , \\
& A_2 \coloneqq D^2 R ( \rho ) [ \delta, \delta ] = 2 \mathsf{E} \left[ r ( \rho ) \braket{ r ( \rho ), \delta } ^ 2 \right] , \\
& D \log ( A_0 ) [ A_2 ] = \int_0^\infty ( A_0 + s I )^{-1} A_2 ( A_0 + s I )^{-1} \, \du s , \\
& D ^ 2 \log ( A_0 ) [ A_1, A_1 ] = - 2 \int_0^\infty ( A_0 + s I )^{-1} A_1 ( A_0 + s I )^{-1} A_1 ( A_0 + s I )^{-1} \, \du s . 
\end{align*}
Define \( \Phi ( \rho ) \coloneqq \log R ( \rho ) \). 
Recall the chain rule for the second-order Fr\'{e}chet derivative (see, e.g., \cite[p. 316]{Bhatia1997}): 
\[
D ^ 2 \Phi ( \rho ) [ \delta, \delta ] = D ^ 2 \log ( R ( \rho ) ) [ D R ( \rho ) [ \delta ], D R ( \rho ) [ \delta ] ] + D \log ( R ( \rho ) ) [ D ^ 2 R ( \rho ) [ \delta, \delta ] ] . 
\]
Then, we write 
\begin{align*}
D^2 \varphi ( \rho ) [ \delta, \delta ] & = \tr \left( \sigma D^2 \Phi ( \rho ) [ \delta, \delta ] \right) \\
& = -2 \int_0^\infty \tr \left( \sigma ( A_0 + s I )^{-1} A_1 ( A_0 + s I )^{-1} A_1 ( A_0 + s I )^{-1} \right) \, \du s + \\
& \quad \,\, \int_0^\infty \tr ( \sigma ( A_0 + s I )^{-1} A_2 ( A_0 + s I )^{-1} ) \, \du s \\
& = 2 \int_0^\infty \tr \left( B_s \left( ( A_0 + s I )^{-1} \sigma ( A_0 + s I )^{-1} \right) \right) \, \du s , 
\end{align*}
where 
\[
B_s \coloneqq \frac{A_2}{2} - A_1 ( A_0 + s I )^{-1} A_1 . 
\]
Since \( ( A_0 + s I )^{-1} \sigma ( A_0 + s I )^{-1} \) is obviously positive semi-definite, it suffices to show that \( B_s \) is positive semi-definite for all \( s \geq 0 \).
We write 
\begin{align*}
B_s & \geq \frac{A_2}{2} - A_1 A_0^{-1} A_1 \\
& = \mathsf{E} \left[ r ( \rho ) \braket{ r ( \rho ), \delta } ^ 2 \right] - \mathsf{E} \left[ r ( \rho ) \braket{ r ( \rho ), \delta } \right] \left( \mathsf{E} \left[ r ( \rho ) \right] \right)^{-1} \mathsf{E} \left[ r ( \rho ) \braket{ r ( \rho ), \delta } \right] , 
\end{align*} 
which is positive semi-definite by an extension of the Cauchy-Schwarz inequality due to Lavergne \cite{Lavergne2008}\footnote{Whereas Lavergne considers the real matrix case, we notice the proof directly extends to the Hermitian matrix case.}. 
\end{proof}

Now, we are ready to prove Theorem \ref{thm_main}. 

\begin{proof}[Theorem \ref{thm_main}]
By the Golden-Thompson inequality, we write 
\[
\tr \left( \exp \left( \log ( \rho_k ) + \log ( - \nabla f ( \rho_k ) ) \right) \right) \leq \tr \left( \rho_k \left( - \nabla f ( \rho_k ) \right) \right) = \mathsf{E} \left[ \frac{\braket{ M, \rho_k }}{ \braket{ M, \rho_k } } \right] = 1 . 
\]
Therefore, by the iteration rule of the proposed algorithm and operator motononicity of the matrix logarithm, we have 
\begin{equation}
\log \rho_{k + 1} \geq \log \rho_k + \log R ( \rho_k ) . \label{eq_iterInLog}
\end{equation}
Then, for any \( K \in \mathbb{N} \), we write 
\begin{align*}
f ( \overline{\rho}_K ) - f ( \hat{\rho} ) & \leq \max_{\sigma \in \mathcal{D}} \tr \left( \sigma \log R \left( \overline{\rho}_K \right) \right) \\
& \leq \max_{\sigma \in \mathcal{D}} \frac{1}{K} \sum_{k = 1}^K \tr \left( \sigma \log R \left( \rho_k \right) \right) \\
& \leq \frac{1}{K} \max_{\sigma \in \mathcal{D}} \sum_{k = 1}^K \tr \left( \sigma \left( \log ( \rho_{k + 1} ) - \log ( \rho_k ) \right) \right) \\
& = \frac{1}{K} \max_{\sigma \in \mathcal{D}} \tr \left( \sigma \left( \log ( \rho_{K + 1} ) - \log ( \rho_1 ) \right) \right) \\
& \leq \frac{\log D}{K} , 
\end{align*}
where the first inequality follows from Lemma \ref{lem_error}, the second follows from Lemma \ref{lem_convexity} and Jensen's inequality, the third follows from \eqref{eq_iterInLog}, and the last follows from the fact that \( \log ( \rho_{K + 1} ) \leq 0 \). 
\end{proof}

\section*{Acknowledgement}
C.-M. Lin and Y.-H. Li are supported by the Young Scholar Fellowship (Einstein Program) of the Ministry of Science and Technology of Taiwan under grant numbers MOST MOST 109-2636-E-002-025 and MOST 110-2636-E-002-012.
H.-C.~Cheng is supported by the Young Scholar Fellowship (Einstein Program) of the Ministry of Science and Technology in Taiwan (R.O.C.) under grant number MOST 109-2636-E-002-001 \& 110-2636-E-002-009, and is supported by the Yushan Young Scholar Program of the Ministry of Education in Taiwan (R.O.C.) under grant number NTU-109V0904 \& NTU-110V0904.

\bibliographystyle{halpha}
\bibliography{list}

\end{document}